\begin{document}
\mainmatter              % start of the contributions
\title{Fast Approximate Matrix Multiplication by Solving Linear Systems}
%
%\titlerunning{Hamiltonian Mechanics}  % abbreviated title (for running head)
%                                     also used for the TOC unless
%                                     \toctitle is used
%
\author{Shiva Manne\inst{1} \and Manjish Pal\inst{2}}
\institute{Birla Institute of Technology, Pilani, India \\
\email{manneshiva@gmail.com},
\and
National Institue of Technology Meghalaya, Shillong, India \\
\email{manjishster@gmail.com}}

\maketitle              % typeset the title of the contribution

\begin{abstract}
In this paper, we present novel deterministic algorithms for multiplying two $n \times n$ matrices approximately. Given two matrices $A,B$ we return a matrix $C'$ which is an \emph{approximation} to $C = AB$. We consider the notion of approximate matrix multiplication in which the objective is to make the Frobenius norm of the error matrix $C-C'$ arbitrarily small. Our main contribution is to first reduce the matrix multiplication problem to solving a set of linear equations and then use standard techniques to find an approximate solution to that system in $\tilde{O}(n^2)$ time. To the best of our knowledge this the first examination into designing quadratic time deterministic algorithms for approximate matrix multiplication which guarantee arbitrarily low \emph{absolute error} w.r.t. Frobenius norm.
\end{abstract}
\section{Introduction}
The multiplication of two $n \times n$ matrices is one of the most basic problems in computer science. Significant effort has been devoted to finding efficient approaches to bounding the exponent ($\omega$) of matrix multiplication. The naive algorithm computes the product of two matrices in $O(n^3)$ operations. Strassen \cite{S1} in his seminal paper was the first to notice that the cubic algorithm was suboptimal.  He proved that multiplication could be achieved with complexity $O(n^{2.81})$. Following this, there was a sequence of works that have improved the exponent. Some of them were small improvements but they represent big conceptual advances. In a breakthrough work Coppersmith and Winograd \cite{CW} gave a very involved algorithm that proved that $\omega < 2.3754$. This bound has been further improved by Stothers to 2.373 \cite{DS}, by Williams to 2.3728642 \cite{W} and most recently by Francois Le Gall to 2.3728639 \cite{G}. These techniques in general can be looked as an attempt to bound the \emph{value} and \emph{border rank} of a trilinear form obtained by taking tensor powers of another trilinear form \cite{W}. In a different line of work Cohn, Umans \cite{CU} and later Cohn, Kleinberg, Szegedy and Umans \cite{CKZU} used a group theoretic construction to re-derive the bound of Coppersmith and Winograd. They also state certain conjectures whose truth implies that the exponent of matrix multiplication would be 2. Further work along this line has been done in \cite{ASU}. Recently, Iwen and Spencer \cite{IS} present a class of matrices whose product can be computed by a deterministic algorithm in time $O(n^{2+\epsilon})$. In addition to this there has been a plethora of work on sparse matrix multiplication \cite{AM,P}.

\subsection{What is Approximate Matrix Multiplication ?}
In this paper, we deal with the notion of approximate matrix multiplication i.e. given the input matrices $A,B$ we intend to return a matrix $C'$ which is an approximation to $C = AB$. The first algorithm for approximate matrix multiplication was proposed by Cohen and Lewis \cite{CL} that is based on sampling. For input matrices with nonnegative entries they show a concentration around the estimate for individual entries in the product matrix with high probability. Later Drineas et al \cite{DK,DKM} have introduced Monte-Carlo algorithms to bound the relative error w.r.t. Frobenius norm i.e. they bound the quantity, $\frac{\|C-C'\|_F}{\|A\|_F\|B\|_F}$ . Their algorithm simply samples $s$ columns from $A$ to form an $n \times s$ matrix $A'$ and $s$ rows from $B$ to form an $s \times n$ matrix and finally returns $A'B'$ as an approximate answer. If $s = O(1)$, the running time of the algorithm is $O(n^2)$. This algorithm is able to guarantee that the relative error can be bounded by $O(\frac{1}{\sqrt{s}})$ with high probability. In fact the algorithm of Cohen and Lewis can be looked as obtaining a bound of $O(\frac{1}{\sqrt{c}})$  where $n^2c$ samples are being used. In a different line of work Magen and Zousias \cite{MZ} have designed algorithms to approximate matrix multiplication guaranteeing arbitrarily low relative error w.r.t the spectral norm i.e. they form sketches of $A, B$ namely $\tilde{A}, \tilde{B}$ such that $\frac{\|C-\tilde{A}\tilde{B}\|_2}{\|A\|_2\|B\|_2}$ can be made arbitrarily small. Their algorithm is also randomized.  Recently, Pagh \cite{P} came up with a new randomized algorithm. Instead of sketching the input matrices and then multiplying the resulting smaller matrices,  the product is treated as a stream of outer products and each outer product is sketched. Using Fast Fourier Transform \cite{BC}, Pagh shows how to use the Count-Sketch algorithm to an outer product \cite{P}.
This algorithm has been derandomized by Kutzkov in \cite{K}.  In this work, we introduce the notion of designing deterministic quadratic time matrix multiplication algorithm that achieves arbitrarily low \emph{absolute error} w.r.t. Frobenius norm. Our technique is completely new as it uses tools from the theory of solving Linear Systems.

\section{Basic Idea}
Given two $n \times n$ matrices $A$ and $B$, consider the equation $AB = C$ where the objective is to find $C$. Multiplying both sides of the equation by a chosen vector $v = [v_1,v_2 \dots v_n]^{T}$, we get $ABv = Cv$, which can be rewritten as $Cv = u$ where computing $u$ can be done in $O(n^2)$ time. Thus we are reduced to solving the system $Vc = u$, where $c$ is the $n^2 \times 1$ variable vector and $V$ is the $n \times n^2$ matrix whose $i^{th}$ row is the $n^2$ dimensional vector whose first $n(i-1)$ entries are 0's, next $n$ entries are $v_1,v_2, \dots, v_n$ and the remaining entries are zeroes.  In what follows we present algorithms that use this property to find an approximate solution to the system of linear equations. Our main contribution is the following theorem.

\begin{theorem}
Given two $n \times n$ matrices $A,B$ such that the magnitude of each entry is atmost $M = O(1)$ and an $\delta > 0$, there exists an algorithm that runs in $O(n^2 \log \frac{1}{\delta})$ time and returns a matrix $C'$ such that the Frobenius norm of the matrix $C' - AB$ is atmost $\delta$. 
\end{theorem}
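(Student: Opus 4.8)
The plan is to follow the reduction sketched in Section~2 and then invoke an iterative linear-system solver whose per-iteration cost is $O(n^2)$ and whose error contracts geometrically. First I would fix a vector $v$ and form $u = A(Bv)$ by two matrix-vector products, which costs $O(n^2)$; the sought matrix $C = AB$ then satisfies the (vectorized) linear system $Vc = u$, where $c \in \mathbb{R}^{n^2}$ lists the entries of $C$ and $V$ is the sparse $n \times n^2$ matrix described in Section~2. The key structural observation I would exploit is that $V$ has only $n^2$ nonzero entries, so both $Vx$ and $V^T y$ can be evaluated in $O(n^2)$ arithmetic operations.

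Next I would solve this system approximately rather than exactly. I would apply a standard Krylov or descent method --- conjugate gradient on the normal equations $V^T V c = V^T u$, or equivalently a Richardson/gradient-descent iteration $c_{k+1} = c_k - \eta\, V^T(Vc_k - u)$ --- starting from $c_0 = 0$. Each iteration performs one multiplication by $V$ and one by $V^T$, hence costs $O(n^2)$; the only remaining task is to bound the number of iterations. Using the hypothesis that every entry of $A,B$ has magnitude at most $M = O(1)$, I would bound the norm of the target solution and of the initial residual by a fixed polynomial in $n$ and $M$, so that a geometric contraction of the error by a constant factor per step yields $\|c_k - c^\star\|_2 \le \delta$ after $O(\log \frac{1}{\delta})$ iterations; reshaping $c_k$ into $C'$ would then give $\|C' - AB\|_F \le \delta$ and total running time $O(n^2 \log \frac{1}{\delta})$.

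The hard part --- and where I would concentrate the argument --- is justifying the geometric contraction and, more delicately, ensuring that the solution the iteration approaches is the true product $C$. The system $Vc = u$ has $n$ equations in $n^2$ unknowns, so it is heavily under-determined: gradient descent from $c_0 = 0$ converges to the minimum-norm solution of $Vc = u$, namely the rank-one matrix $uv^T/\|v\|_2^2$, which is not $C$ in general. I therefore expect that a single vector $v$ cannot suffice, and that the real content is to use a family of vectors $v^{(1)}, \dots, v^{(k)}$ (equivalently, to work with the matrix equation $CV' = U'$ whose columns stack these products), to control the condition number of the resulting system so that the contraction factor is bounded away from $1$, and to prove that the recovered iterate is within $\delta$ of $C$ in Frobenius norm. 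Pinning down how few vectors suffice while simultaneously keeping the per-iteration cost at $O(n^2)$ and the iteration count at $O(\log \frac{1}{\delta})$ --- and certifying that the limit is $C$ rather than some other member of the solution set --- is the crux on which the entire bound rests.
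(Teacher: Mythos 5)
Your proposal follows the same route as the paper: a single probe vector $v$, the under-determined system $Vc = u$ with $u = A(Bv)$, passage to the normal equations $V^TVc = V^Tu$, and an iterative solver with $O(n^2)$ cost per step and $O(\log\frac{1}{\delta})$ iterations. The difference is that you explicitly flag the step that neither you nor the paper justifies, and it is a genuine gap, not a technicality. The system $Vc = u$ has $n$ equations in $n^2$ unknowns; its solution set is an affine subspace of dimension $n^2 - n$ containing the true product $C$ as just one point, and a descent iteration started at $c_0 = 0$ converges to the minimum-norm solution $uv^T/\|v\|_2^2$, a rank-one matrix that is in general at Frobenius distance $\Omega(\|C\|_F)$ from $C$. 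The paper's own error analysis does not close this gap: the chain $|x'-x'''| \le |x'-x''| + |x''-x'''| \le \rho + |x''| + |x'''|$ is just the triangle inequality applied to the norms of two different solutions, and the smallness of those norms is an artifact of the choice $v_i = \epsilon = n^{-3}$, which makes every vector in sight tiny. Since $\|AB\|_F$ can be of order $n$ or larger, a quantity bounded by $O(n^{-3/2})$ cannot coincide with the vectorization of the true product, so the conclusion $\|C' - AB\|_F \le \delta$ does not follow from that argument.

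Your suggested repair --- using a family of probe vectors $v^{(1)},\dots,v^{(k)}$ --- runs into a counting obstruction you should make explicit: the constraints $Cv^{(i)} = u^{(i)}$ determine $C$ only if the $v^{(i)}$ span $\mathbb{R}^n$, hence only if $k \ge n$; but each $u^{(i)} = A(Bv^{(i)})$ costs $\Theta(n^2)$ to form, for a total of $\Theta(n^3)$ before any iteration begins. With $k < n$ vectors the system remains under-determined and the recovered matrix is a rank-$k$ object, not $C$. So the crux you isolate cannot be resolved within the claimed $O(n^2\log\frac{1}{\delta})$ budget by this family of methods, and the theorem as stated is proved neither by your argument nor by the paper's.
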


\section{Preliminaries}\label{Pre}
In this section we describe the basic linear algebraic preliminaries needed for our analysis.
\subsection{Vector Norms} 
\begin{definition}
A \emph{vector norm} is a mapping from $\|\cdot\| :\mathbb{C}^{n} \rightarrow \mathbb{C}$ that
satisfies the following
\begin{itemize}
\item $|\vec{x}| > 0$ for all $\vec{x} \neq 0. (|0| = 0)$   

\item $|\alpha \vec{x}| = |\alpha||\vec{x}|$. 

\item $| \vec{x} + \vec{y} | \leq |\vec{x}| + |\vec{y}|$. 
\end{itemize}
\end{definition}

Thus norm is a measure of of the length of a vector. Following are examples of some vector norms 
\begin{itemize}
\item[(a)] $l_p$ norm: $|\vec{x}|_p = (\sum_{i} |x_i|^p)^{1/p}$ for $1 \leq p \leq \infty$
\item[(b)] $l_\infty$ norm: $|\vec{x}|_\infty = \max_{i} |x_i|$.
\end{itemize}

\subsection{Matrix Norms}
\begin{definition}
A \emph{matrix norm} is a mapping from $\|\cdot\| :\mathbb{C}^{n \times n} \rightarrow \mathbb{C}$ that
satisfies the following 
\begin{itemize}
\item $\|A\| \geq 0$ for all $A \in \mathbb{C}^{n \times n}$ and $\|A\| = 0$ iff $A = 0$.
\item $\|\alpha A\| = |\alpha|\|A\|$ for all $A,B \in \mathbb{C}^{n \times n}$.
\item $\|A + B\| \leq \|A\| + \|B\|$ for all $A,B \in \mathbb{C}^{n \times n}$.
\item $\|AB\| \leq \|A\|\|B\|$ for all $A,B \in \mathbb{C}^{n \times n}$. 
\end{itemize}
\end{definition}

Notice that the first three properties states that $\|\cdot\|$ is a vector norm on $\mathbb{C}^{n \times n}$,
the last property only makes sense if $A$ and $B$ are matrices. Examples of matrix norms are 
\begin{itemize}
\item[(a)] Maximum norm: $\|A\|_{max} = \max |a_{ij}|$.
\item[(b)] Infinity norm: $\|A\|_{\infty} = \max_{1\leq i\leq n} \sum_{j=1}^n|a_{ij}|$.
\item[(b)] Frobenius norm: $\|A\|_{F} = \sqrt{\sum_{i,j} a_{ij}^2}$.  
\item[(c)] $p$-operator norm: $\|A\|_{P} = \max_{|\vec{x}|_p = 1} |A\vec{x}|_p$, where $|\cdot|_p : \mathbb{C}^{n} \rightarrow n$ is the vector $p$-norm and $1 \leq p \leq \infty$.  
\end{itemize}

\subsubsection{Induced Norms}

\begin{definition}
Given a vector norm $|\cdot|_v$ on $\mathbb{C}^n$ we can define \emph{the induced norm} $\|\cdot\|$ on $\mathbb{C}^{n \times n}$ as 
\begin{eqnarray*}
\|A\| = \max_{\vec{x} \neq 0}\frac{|A\vec{x}|}{|\vec{x}|}
\end{eqnarray*}
for all $A \in \mathbb{C}^{n \times n}$
\end{definition}
It can be verified that an induced norm is indeed a matrix norm.  For example, $p$-operator norm is an induced norm for all
$1 \leq p \leq \infty$. 
Another useful notion in the context of matrix norms is that of \emph{spectral radius}, 
\begin{definition}
The \emph{spectral norm} of a matrix $A$ denoted by $\rho(A)$ is defined as
\begin{eqnarray*}
\rho(A) = \max_{1 \leq i \leq n} |\lambda_i|
\end{eqnarray*}
where $\lambda_1,\lambda_2 \dots \lambda_n$ are the eigenvalues of $A$.  
\end{definition}
The following is an important fact connecting spectral norm and induced norms which is going to be of use for us later
\begin{theorem}
For any induced norm $\|\cdot\|$ defined on $\mathbb{C}^{n \times n}$, the following holds for every matrix $A$,
\begin{eqnarray*}
\rho(A) \leq \|A\|.
\end{eqnarray*}
\end{theorem}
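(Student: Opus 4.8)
The plan is to exploit the defining variational characterization of the induced norm by testing it against a single, cleverly chosen vector — namely an eigenvector that realizes the spectral radius. First I would invoke the definition $\rho(A) = \max_{1 \leq i \leq n} |\lambda_i|$ to select an eigenvalue $\lambda$ with $|\lambda| = \rho(A)$, and recall that every eigenvalue admits at least one nonzero eigenvector $\vec{x} \in \mathbb{C}^n$ satisfying $A\vec{x} = \lambda \vec{x}$. This particular $\vec{x}$ will serve as the test vector fed into the maximization that defines $\|A\|$.

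Next I would compute the numerator of the relevant quotient directly. Applying the absolute-homogeneity property of the underlying vector norm to the (possibly complex) scalar $\lambda$ yields $|A\vec{x}| = |\lambda \vec{x}| = |\lambda|\,|\vec{x}| = \rho(A)\,|\vec{x}|$. Since $\vec{x} \neq 0$, its norm $|\vec{x}|$ is a strictly positive real number, so dividing by it is legitimate.

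Finally, because the induced norm is a maximum of the ratio $|A\vec{y}|/|\vec{y}|$ taken over all nonzero $\vec{y}$, and $\vec{x}$ is one admissible choice, the maximum is at least the value of the ratio at $\vec{x}$:
\begin{eqnarray*}
\|A\| = \max_{\vec{y} \neq 0} \frac{|A\vec{y}|}{|\vec{y}|} \geq \frac{|A\vec{x}|}{|\vec{x}|} = \rho(A),
\end{eqnarray*}
which is precisely the claimed inequality.

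I do not anticipate a genuinely hard step here; once the correct test vector is identified the argument is essentially a one-line substitution. The only point demanding minor care is conceptual rather than computational: one must confirm that the eigenvector — whose entries may be complex — is a legitimate element of the domain over which the induced norm maximizes, and that the homogeneity axiom is being applied to the complex eigenvalue $\lambda$ rather than to a real scalar. Both facts are immediate from the definitions stated above, so the proof reduces to assembling these observations in the right order.
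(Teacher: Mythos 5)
Your proof is correct and complete: choosing an eigenvector for an eigenvalue of maximal modulus and plugging it into the variational definition of the induced norm immediately gives $\|A\| \geq |\lambda| = \rho(A)$, and your remark that the complex eigenvector is an admissible test vector is exactly the point that makes this work, since the norm here is defined on $\mathbb{C}^n$. The paper itself supplies no argument for this theorem --- it only cites a standard reference --- so your write-up is in fact the standard textbook proof that the citation points to, and there is nothing to compare beyond that.
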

Proof of this can be found in any standard text book on matrix analysis \cite{HJ}.

\section{Using Positive Definite System}
In this section we describe a deterministic algorithm to find an approximate solution to the system mentioned in Section. The idea is to pre-multiply both sides of equation $Vc = u$ by $V^{T}$ to get $V^{T}V c = V^{T}u$, where $V^{T}V$ is a positive semi-definite matrix. Thus we have converted the system into form $Ax = y$ where $A$ is a symmetric positive semi-definite matrix. Unfortunately, there are no methods known to find an approximate solution to a system in this form but if $A$ is a symmetric positive definite then there are iterative methods to find an approximate solution to the system. In the remaining of this section we first describe an iterative method to solve positive definite system and discuss how $A$ can be converted to a positive definite matrix and then apply iterative methods to solve the system.

\subsection{Steepest Descent Method for solving Positive Definite Systems}
Solving positive definite systems is a topic of deep interest. A good reference for iterative methods for solving positive definite systems is \cite{GL}. Steepest Descent is a general iterative method for finding local minima of a function $f$. In the context of Linear systems it is useful because the solution to $Ax = b$ is the $x$ that minimizes the quadratic form $f(x) = \frac{1}{2}x^{T}Ax - bx + c$. Given a current estimate $x_i$, the gradient $\nabla f(x_i)$ or more precisely, its negative gives the direction in which $f$ is decreasing most rapidly. Hence, one would expect that taking a step in this direction should bring us closer to the minimum we seek. We will see that in order to guarantee convergence of this method $A$ has to be positive definite. Let $x$ denote the actual minimizer, $x_i$ denote our $i^{th}$ estimate, and
\begin{eqnarray*}
e_i &=& x - x_i \\
r_i &=& b - Ax_i = Ae_i
\end{eqnarray*}
The update rule is 
\begin{eqnarray*}
x_{i+1} &=& x_i + \alpha_ir_i
\end{eqnarray*}
where $\alpha_i$ is chosen such that $r_{i+1}r_i = 0$. Using simple algebraic manipulations the value of 
$\alpha_i$ turns out to be $\frac{r_i^{T}r_i}{r_i^{T}Ar_i}$. 

\subsubsection{Convergence}
It can be shows that the steepest descent method converges, i.e. the 2-norm of the error vector is atmost $\rho$ in $O(\kappa \log \frac{1}{\rho})$ steps, where $\kappa = \frac{\lambda_{max}(A)}{\lambda_{min}(A)}$.
We can write $e_{i+1}$ (the error vector at the ${i+1}^{th}$ iteration) in terms of $e_i$ as follows:
\begin{eqnarray*}
e_{i+1} &=& x_1 + \alpha_i(b - Ax_i) - x* \\
&=& (I - \alpha_i A) e_i
\end{eqnarray*}
It can be proven that if we analyse this iteration by taking a fixed
value of $\alpha_i = \frac{2}{\lambda_{max} + \lambda_{min}}$, which is worse than
the value of $\alpha_i$ we actually choose, the 2-norm of the error vector $|e_i| \leq \rho$
in $O(\kappa \log \frac{1}{\rho})$ iterations \cite{l1}.

\subsection{Perturbation of $A$}
Recall that $A = V^{T}V$ which has zero eigenvalues and hence is not positive definite. We perform a perturbation of $A$ to $\hat{A} = A + \epsilon I$ where $\epsilon > 0$ will be fixed later. Notice that $A$ is a block diagonal matrix with $A'$ as the $n \times n$ matrix that appears in the main diagonal $n$ times where $A'$ is the outerproduct $vv^{T}$. The following holds for $\hat{A}$.

\begin{lemma}
$\hat{A}$ is positive definite.
\end{lemma}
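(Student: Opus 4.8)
The plan is to verify the defining property of positive definiteness directly, namely that $x^{T}\hat{A}x > 0$ for every nonzero vector $x \in \mathbb{C}^{n^2}$. First I would observe that $\hat{A}$ is symmetric: it is the sum of $A = V^{T}V$, which is symmetric since $(V^{T}V)^{T} = V^{T}V$, and the symmetric matrix $\epsilon I$. This justifies speaking about positive definiteness in terms of the quadratic form at all.

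The central computation is to expand the quadratic form using the structure $A = V^{T}V$. For an arbitrary nonzero $x$ I would write
\begin{eqnarray*}
x^{T}\hat{A}x = x^{T}(V^{T}V + \epsilon I)x = (Vx)^{T}(Vx) + \epsilon\, x^{T}x = |Vx|_2^{2} + \epsilon |x|_2^{2}.
\end{eqnarray*}
The first term $|Vx|_2^{2}$ is nonnegative, which simply re-expresses the fact that $A = V^{T}V$ is positive semi-definite. The second term is where the perturbation does its work: since $\epsilon > 0$ and $x \neq 0$ forces $|x|_2^{2} > 0$, the quantity $\epsilon |x|_2^{2}$ is strictly positive. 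Hence the whole expression is strictly positive, which is exactly the statement that $\hat{A}$ is positive definite.

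As an alternative route that I would keep in mind, one could argue through the spectrum: because $A$ is symmetric positive semi-definite its eigenvalues $\lambda_i$ are all real and nonnegative, and adding $\epsilon I$ shifts every eigenvalue to $\lambda_i + \epsilon \geq \epsilon > 0$, so $\hat{A}$ is symmetric with strictly positive eigenvalues and is therefore positive definite. This viewpoint also cleanly connects to the convergence analysis, since it shows $\lambda_{min}(\hat{A}) \geq \epsilon$, which is precisely the quantity that will later control the condition number $\kappa$ in the steepest descent bound.

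There is no substantial obstacle here; the only point requiring care is the distinction between the weak and strict inequalities. The matrix $A$ itself is genuinely only positive semi-definite (it has zero eigenvalues, as the block structure built from the rank-one outer product $vv^{T}$ makes clear), so the argument must lean entirely on the strictly positive contribution of $\epsilon |x|_2^{2}$ rather than on any positivity of $|Vx|_2^{2}$. Keeping that term explicit in the expansion is what makes the strict inequality hold for all nonzero $x$, including those in the kernel of $V$.
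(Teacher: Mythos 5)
Your proof is correct and takes essentially the same route as the paper's: expand the quadratic form $x^{T}\hat{A}x = x^{T}Ax + \epsilon x^{T}x$, use positive semi-definiteness of $A = V^{T}V$ for the first term, and strict positivity of $\epsilon x^{T}x$ for nonzero $x$ for the second. Your version is in fact more careful than the paper's, which contains a typo (writing $\epsilon x^{T}Ax$ where $\epsilon x^{T}x$ is meant) and does not explicitly isolate the strictly positive term that carries the argument.
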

\begin{proof}
We need to show that for any non-zero vector $x$, $x^{T}Ax > 0$ which is equivalent to proving $x^{T}Ax + \epsilon x^{T}Ax > 0$. Since $A$ is positive semi-definite and $\epsilon > 0$, it is true. 
\end{proof}

\begin{lemma}
If $\lambda$ is the maximum eigenvalue of $A$, then $\lambda = \sum_{i=1}^n v_i^2$.
\end{lemma}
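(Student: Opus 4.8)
The plan is to exploit the block structure of $A$ noted just above the statement: $A$ is block diagonal, consisting of $n$ copies of the same $n \times n$ block $A' = vv^{T}$. Since the characteristic polynomial of a block diagonal matrix factors as the product of the characteristic polynomials of its blocks, the spectrum of $A$ is precisely the spectrum of $A'$, with each eigenvalue of $A'$ occurring $n$ times. In particular $\lambda_{\max}(A) = \lambda_{\max}(A')$, so it suffices to compute the largest eigenvalue of the single rank-one outer product $A' = vv^{T}$.

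Next I would directly exhibit an eigenpair of $A'$. Writing $v = [v_1, \dots, v_n]^{T}$, a one-line computation gives
\[
A' v = (vv^{T})v = v(v^{T}v) = \left(\sum_{i=1}^{n} v_i^2\right) v,
\]
so $v$ is an eigenvector of $A'$ with eigenvalue $\sum_{i=1}^{n} v_i^2$.

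It then remains to argue that this is the \emph{largest} eigenvalue. Since $A' = vv^{T}$ has rank one, its column space being spanned by $v$, its kernel has dimension $n-1$ and contributes $n-1$ zero eigenvalues. As $A'$ is real symmetric and positive semi-definite, all of its eigenvalues are nonnegative, so the single nonzero eigenvalue $\sum_{i} v_i^2$ is necessarily the maximum. Equivalently, one may note that $\mathrm{tr}(A') = \sum_{i} v_i^2$ equals the sum of the eigenvalues, and since all but one of them vanish, the surviving eigenvalue must be $\sum_{i} v_i^2$. Combining this with the reduction in the first step yields $\lambda = \sum_{i=1}^{n} v_i^2$.

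I do not anticipate a genuine obstacle here: the result is essentially the standard spectral description of a rank-one symmetric matrix, lifted through the block-diagonal structure of $A$. The only point requiring a little care is the bookkeeping of the block decomposition, namely confirming that replicating the block $A'$ does not produce an eigenvalue larger than that of a single block; this is immediate from the multiplicativity of the characteristic polynomial over the diagonal blocks.
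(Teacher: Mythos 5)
Your proof is correct, and it reaches the same destination by a slightly more self-contained route. Like the paper, you reduce to the single block $A' = vv^{T}$ via the block-diagonal structure; but where the paper identifies the nonzero eigenvalue by invoking a characteristic-polynomial formula for rank-one updates cited from the literature, namely $\det(\lambda I - vv^{T}) = (1 - \frac{1}{\lambda}v^{T}v)\,\lambda^{n}$, you simply exhibit $v$ as an eigenvector with eigenvalue $v^{T}v$ and then argue maximality from rank one plus positive semi-definiteness (or the trace). Your version buys elementarity and rigor: it avoids an external citation, and it explicitly justifies why the single nonzero eigenvalue is the \emph{maximum} one, a point the paper glosses over by asserting without argument that $A$ ``has just one non zero eigenvalue.'' The paper's version is marginally shorter if one is willing to take the determinant identity as known. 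Both are fine; yours is the one I would keep.
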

\begin{proof}
Since $A$ has just one non zero eigenvalue which is also the only non-zero eigenvalue of $A'$.
To find the eigenvalue of $A'$ we consider the characteristic polynomial of $A'$ which is 
\begin{eqnarray*}
\det(\lambda I - A') &=& \det(\lambda I - vv^{T}) \\
&=& (1 - \frac{1}{\lambda}v^{T}v) |\lambda| 
\end{eqnarray*}
where the last line follows from a result in \cite{ZD}. Thus to find the to find the eigen-value of $vv^{T}$ we have to solve for $(1 - \frac{1}{\lambda}v^{T}v) |\lambda| = 0$ that gives $\lambda = v^{T}v =  \sum_{i=1}^n v_i^2 $
\end{proof}

\begin{lemma}
Let $\lambda_{max}(\hat{A})(\lambda_{min}\hat{A})$ be the maximum(minimum) eigenvalue of $\hat{A}$, then $\frac{\lambda_{max}(\hat{A})}{\lambda_{min}(\hat{A})} \leq 1 + \frac{\lambda}{\epsilon}$ where $\lambda = \sum_{i=1}^n v_i^2$.
\end{lemma}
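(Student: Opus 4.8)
The plan is to exploit the simple spectral relationship between $A$ and $\hat{A} = A + \epsilon I$: shifting a matrix by $\epsilon I$ shifts its entire spectrum by $\epsilon$. First I would observe that if $\mu$ is an eigenvalue of $A$ with eigenvector $x$, then $\hat{A}x = Ax + \epsilon x = (\mu + \epsilon)x$, so $\mu + \epsilon$ is an eigenvalue of $\hat{A}$ with the same eigenvector. Since $A$ is symmetric (being $V^{T}V$) it is diagonalizable, and $\hat{A}$ shares its eigenbasis, so this accounts for the whole spectrum of $\hat{A}$. In particular $\lambda_{max}(\hat{A}) = \lambda_{max}(A) + \epsilon$ and $\lambda_{min}(\hat{A}) = \lambda_{min}(A) + \epsilon$.

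Next I would substitute what is already known about the spectrum of $A$. By the previous lemma, $\lambda_{max}(A) = \lambda = \sum_{i=1}^n v_i^2$, so $\lambda_{max}(\hat{A}) = \lambda + \epsilon$. For the denominator I only need a lower bound rather than the exact value: since $A = V^{T}V$ is positive semi-definite, every eigenvalue of $A$ is nonnegative, so $\lambda_{min}(A) \geq 0$ and hence $\lambda_{min}(\hat{A}) = \lambda_{min}(A) + \epsilon \geq \epsilon$.

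Combining these gives
\begin{eqnarray*}
\frac{\lambda_{max}(\hat{A})}{\lambda_{min}(\hat{A})} = \frac{\lambda + \epsilon}{\lambda_{min}(A) + \epsilon} \leq \frac{\lambda + \epsilon}{\epsilon} = 1 + \frac{\lambda}{\epsilon},
\end{eqnarray*}
which is exactly the claimed bound. There is no genuine obstacle here; the statement is essentially a one-line consequence of the eigenvalue shift together with positive semi-definiteness, and the only step requiring a moment's thought is recognizing that a lower bound $\lambda_{min}(A) \geq 0$ suffices. It is worth remarking that the inequality is in fact tight: because each diagonal block $A' = vv^{T}$ has rank one, the matrix $A$ carries the eigenvalue $0$ with multiplicity $n(n-1)$, so $\lambda_{min}(A) = 0$ and the bound holds with equality whenever $n \geq 2$.
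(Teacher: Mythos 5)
Your proof is correct, and it reaches the same two key facts as the paper --- $\lambda_{max}(\hat{A}) = \lambda + \epsilon$ and $\lambda_{min}(\hat{A}) \geq \epsilon$ --- but by a noticeably more direct route. The paper first inverts the diagonal block via the Sherman--Morrison formula to get an explicit expression for $(vv^{T} + \epsilon I)^{-1}$, then invokes the induced-norm inequality $\rho(M) \leq \|M\|_2$ together with the identities $\|\hat{A}\|_2 = \lambda_{max}(\hat{A})$ and $\|\hat{A}^{-1}\|_2 = 1/\lambda_{min}(\hat{A})$ for symmetric matrices, and only then reads off the two eigenvalues ``by the definition of $\hat{A}$.'' You instead observe that adding $\epsilon I$ shifts the entire spectrum of the symmetric matrix $A$ by $\epsilon$, take $\lambda_{max}(A) = \lambda$ from the preceding lemma, and use positive semi-definiteness to get $\lambda_{min}(A) \geq 0$; this is a genuinely shorter argument, avoids machinery that does no work in the final inequality, and has the added virtue of showing the bound is actually an equality for $n \geq 2$ since each rank-one block contributes the eigenvalue $0$. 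What the paper's longer route buys is the explicit formula for $\hat{A}^{-1}$ and the norm identities, which reappear in the subsequent error analysis; as a proof of this lemma alone, yours is cleaner and equally rigorous.
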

\begin{proof}
Since $\hat{A}$ is block diagonal, $\hat{A}^{-1}$ is also block diagonal with $(A' + \epsilon I)^{-1}$ repeated in the main diagonal $n$ times. By Sherman-Morrison formula,
\begin{eqnarray*}
(vv^{T} + \epsilon I)^{-1} &=& (\epsilon I)^{-1} - \frac{(\epsilon I)^{-1}vv^{T}(\epsilon I)^{-1}}{1 + v^{T}(\epsilon I)v}
\\
&=& \frac{1}{\epsilon}\cdot I - \frac{1}{\epsilon^{2}} \cdot \frac{A'}{1 + \epsilon \sum_{i=1}^{n} v_i^2}
\end{eqnarray*} 
By lemma, if $\lambda_i$ is an eigenvalue of $A$ then $\frac{1}{\lambda}$ is an eigenvalue of $A^{-1}$. Also from Section \ref{Pre} for any induced norm $\|\cdot\|$,
\begin{eqnarray*}
\lambda_{max}(\hat{A}) \leq ||\hat{A}||_{2} \\
\lambda_{max}({\hat{A}^{-1}}) \leq ||\hat{A}^{-1}||_{2} \\
\frac{1}{\lambda_{min}(\hat{A})} \leq ||\hat{A}^{-1}||_{2}
\end{eqnarray*}
Since $\hat{A}$ is symmetric $||\hat{A}||_{2} = \sqrt{\lambda_{max}(\hat{A}^{T}\hat{A})} = \lambda_{max}(\hat{A})$ and since $\hat{A}^{-1}$ is also symmetric, $||\hat{A}^{-1}||_{2} =  \lambda_{max}(\hat{A}^{-1}) = \frac{1}{\lambda_{min}(\hat{A})}$. By the definition of $\hat{A}$, $\lambda_{max}(\hat{A}) = \lambda + \epsilon$ where $\lambda = \sum_{i=1}^n v_i^2$ and $\lambda_{min}(\hat{A}) = \epsilon$. Thus, $\frac{\lambda_{max}(\hat{A})}{\lambda_{min}(\hat{A})} \leq 1 + \frac{\lambda}{\epsilon}$.
\end{proof}

\subsection{Error Analysis}
Let $x'$ be the output of the algorithm, $x''$ be the a solution to the equation $(\hat{A}+ \epsilon I) x = y$ and $x'''$ be the solution to the original equation $Ax = y$.  Because of the guarantees of the algorithm we can assume that $|x' - x''| \leq \rho$. Our aim is to bound the 2-norm of the error vector $x' - x'''$. First we derive upper and lower bounds on $|x''|$.
\begin{eqnarray*}
|y| \leq ||(\hat{A}+ \epsilon I)||_2 |x''| \\
|x''| \geq \frac{||(\hat{A}+ \epsilon I)||_2}{|y|} 
\end{eqnarray*}
and
\begin{eqnarray*}
|x''| \leq ||(\hat{A}+ \epsilon I)^{-1}||_2 |y|
\end{eqnarray*}
We now prove a result that shows that the norm of the error vector $|x'-x'''|$ can be made arbitrarily small.

\begin{lemma}
$|x' - x'''| \leq \delta$ where $\delta > 0$ can be made arbitrarily small.
\end{lemma}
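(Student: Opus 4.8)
The plan is to bound $|x'-x'''|$ by the triangle inequality, separating the error of the iterative solver from the error introduced by the perturbation $A \mapsto \hat{A}=A+\epsilon I$:
\begin{eqnarray*}
|x'-x'''| \leq |x'-x''| + |x''-x'''|.
\end{eqnarray*}
The first term is purely algorithmic and the second purely analytic, so I would drive each below $\delta/2$ using the two free knobs: the number of steepest-descent steps and the perturbation parameter $\epsilon$. Throughout I take $x'''$ to be the minimum-norm solution $A^{+}y$ of the singular system $Ax=y$; this is the unique solution lying in $\mathrm{range}(A)$, and it is the correct target, since the reduction produces a right-hand side $y=V^{T}u$ and we want the recovered vector to correspond to a genuine $C'$.

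For the first term I would invoke the convergence guarantee already established: steepest descent on the positive definite system $\hat{A}x=y$ yields an iterate with $|x'-x''|\leq\rho$ after $O(\kappa\log\frac{1}{\rho})$ steps, where $\kappa\leq 1+\frac{\lambda}{\epsilon}$ and $\lambda=\sum_i v_i^2$ by the earlier lemma. Once $\epsilon$ is fixed, $\kappa$ is a constant, so iterating until $\rho\leq\delta/2$ costs only $O(\log\frac{1}{\delta})$ extra steps and makes $|x'-x''|\leq\delta/2$.

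The heart of the argument is the perturbation term, and here the naive estimate is useless: since $\lambda_{min}(\hat{A})=\epsilon$ we have $\|\hat{A}^{-1}\|_2=1/\epsilon$, so writing $\hat{A}(x''-x''')=-\epsilon x'''$ gives only $|x''-x'''|\leq\epsilon\|\hat{A}^{-1}\|_2\,|x'''|=|x'''|$, which does not shrink. The observation that rescues the proof is that $y=V^{T}u\in\mathrm{range}(V^{T})=\mathrm{range}(A)$, so $y$ has no component along $\ker(A)$. Expanding $y=\sum_i c_i w_i$ in an orthonormal eigenbasis $\{w_i\}$ of $A$ with eigenvalues $\mu_i$, the coefficients $c_i$ vanish whenever $\mu_i=0$, whence
\begin{eqnarray*}
x''-x''' &=& \hat{A}^{-1}y - A^{+}y \\
&=& \sum_{\mu_i\neq 0} c_i\left(\frac{1}{\mu_i+\epsilon}-\frac{1}{\mu_i}\right)w_i \\
&=& -\epsilon\sum_{\mu_i\neq 0}\frac{c_i}{\mu_i(\mu_i+\epsilon)}\,w_i.
\end{eqnarray*}
By the preceding lemmas the only nonzero eigenvalue of $A$ is $\mu_i=\lambda$, so this collapses to $|x''-x'''|=\frac{\epsilon}{\lambda(\lambda+\epsilon)}|y|\leq\frac{\epsilon}{\lambda^2}|y|$, which is $O(\epsilon)$ because $\lambda$ is a fixed positive quantity determined by $v$ and $|y|\leq\|V\|_2\,|u|$ is bounded (the hypothesis $M=O(1)$ keeps $|u|=|ABv|$ under control).

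It then suffices to choose $\epsilon\leq\frac{\lambda^2\delta}{2|y|}$, forcing $|x''-x'''|\leq\delta/2$, and to iterate until $\rho\leq\delta/2$ as above; combining the two bounds gives $|x'-x'''|\leq\delta$. The main obstacle is exactly this perturbation term: one cannot route $\hat{A}^{-1}$ through its operator norm, and must instead exploit the spectral gap of $A$ (its nonzero spectrum is bounded away from $0$ by $\lambda$) together with the fact that the right-hand side lives entirely in $\mathrm{range}(A)$. A secondary tension worth flagging is that shrinking $\epsilon$ inflates $\kappa=1+\lambda/\epsilon$ and hence the iteration count; for the present existence statement this is harmless once $\epsilon$ is fixed before the iteration count is chosen, though the quantitative $O(n^2\log\frac{1}{\delta})$ running-time claim of the main theorem requires separately checking that a suitable $\epsilon$ need not be taken too small.
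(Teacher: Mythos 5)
Your proposal is correct under your stated reading of $x'''$, but it handles the crucial perturbation term by a genuinely different argument than the paper. Both proofs open with the same triangle inequality $|x'-x'''|\leq|x'-x''|+|x''-x'''|$ and both dispatch $|x'-x''|\leq\rho$ via the steepest-descent convergence guarantee. The divergence is in $|x''-x'''|$: the paper does not attempt a perturbation bound at all, but simply writes $|x''-x'''|\leq|x''|+|x'''|$ and then argues that each norm is individually small, using the hard-wired choices $v_i=\epsilon=1/n^3$ to force $\|(\hat{A}+\epsilon I)^{-1}\|_2|y|=O(M/n^2)$ and $|x'''|=O(n^{-3/2})$, concluding $|x'-x'''|\leq\rho+o(1)$, i.e.\ $\delta$ only ``arbitrarily close to $\rho$'' rather than independently tunable. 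Your spectral argument --- expanding $y$ in the eigenbasis of $A$, using $y\in\mathrm{range}(A)$ to kill the kernel coefficients, and exploiting that the nonzero spectrum of $A$ is the single value $\lambda$ bounded away from zero --- is sharper and more honest: it yields $|x''-x'''|\leq\frac{\epsilon}{\lambda^2}|y|$ with $\epsilon$ as a free knob, and it correctly identifies why the naive $\|\hat{A}^{-1}\|_2$ route fails. One caveat: your conclusion hinges on taking $x'''=A^{+}y$, whereas the paper's computation (which involves the entries $c_{ji}$ of $C$) implicitly takes $x'''$ to be the vectorized true product $c$. That distinction is not cosmetic: $c$ generally has a large component in $\ker(A)=\ker(V)$, so with $x'''=c$ one gets $x''-x'''\to -P_{\ker(A)}c$ as $\epsilon\to 0$ and the lemma would be false; your minimum-norm interpretation is the only one under which the statement is provable (and, relatedly, the paper's lower bound $|Ax'''|\geq\sqrt{n}\,v\,|x'''|$ silently assumes the cross terms $\sum_{k_1<k_2}c_{jk_1}c_{jk_2}$ are nonnegative, which fails for general $C$). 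Your closing remark about the $\epsilon$--$\kappa$ trade-off is also well taken: the paper avoids it by fixing $\epsilon$ as a function of $n$ rather than of $\delta$, at the price of losing the ``arbitrarily small $\delta$'' claim.
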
 
\begin{proof}
Using the fact that $(\hat{A} + \epsilon I)x'' = \hat{A} x'''$, we have $|\hat{A} x'''| = |(\hat{A} + \epsilon I)x'|$. Now
\begin{eqnarray*}
|\hat{A}x'''| &=& \sqrt{\sum_{j=1}^{n}(nv^{2} \sum_{i=1}^n c_{ji})^2}  = \sqrt{n}v \sqrt{{\sum_{j=1}^{n}(\sum_{i=1}^n c_{ji})^2}} \\
&=& \sqrt{n}v \sqrt{{\sum_{i,j=1}^{n} c_{ji}^2} + 2 \sum_{j=1}^n\sum_{k_1<k_2}^{n} c_{jk_1} c_{jk_2}} \\
&\geq& \sqrt{n}v\sqrt{{\sum_{i,j=1}^{n} c_{ji}^2}} = \sqrt{n}v|x'''|
\end{eqnarray*} 
Thus from above
\begin{eqnarray*}
\sqrt{n}v|x'''| &\leq& |(\hat{A} + \epsilon I)x''|  \leq \|(\hat{A} + \epsilon I)\|_2 |x''| \\
&\leq&   \|(\hat{A} + \epsilon I)\|_2 \|(\hat{A} + \epsilon I)^{-1}\|_2 |y| \\
& = & \frac{\lambda + \epsilon}{\epsilon} \sqrt{\sum_{i,j}^n v_i^2\left(\sum_{k=1}^n c_{ik}v_j\right)^2} \\
|x'''| &\leq& \frac{1}{\sqrt{n}v} \cdot \frac{\lambda + \epsilon}{\epsilon} \sqrt{\sum_{i,j}^n v_i^2\left(\sum_{k=1}^n c_{ik}v_j\right)^2}  \\
&\leq& \frac{1}{n^2\sqrt{n}} \cdot\frac{\lambda + \epsilon}{\epsilon} \cdot M'.  
\end{eqnarray*} 
where $M' = O(n)$ is the maximum row sum of the resultant matrix $C$ and  $\epsilon = v_i = v = \frac{1}{n^3}$ for all $i = 1, 2 \dots, n$. From the above analysis we have,
\begin{eqnarray*}
|x' - x'''| &\leq& |x' - x''| + |x'' - x'''| \leq \rho + |x''| + |x'''| \\ 
&\leq& \rho + \|(\hat{A} + \epsilon I)^{-1}\|_2 |y| +  \frac{1}{\sqrt{n}} \cdot\frac{\lambda + \epsilon}{\epsilon} \cdot M' \\
&=& \rho +  \frac{M}{n^5\epsilon} + \frac{1}{n^2\sqrt{n}} \cdot\frac{\lambda + \epsilon}{\epsilon} \cdot M' \leq \delta. 
\end{eqnarray*}
where $\delta$ is a constant greater than but arbitrarily close to $\rho$ (say 1.001 $\rho$). 
\end{proof}

\subsection{Running Time Analysis}
The running time analysis is simple, the number of iterations of the steepest descent method is $O(\frac{\lambda_{max}(\hat{A})}{\lambda_{min}(\hat{A})} \log \frac{1}{\rho})$ which according to our choice of $\epsilon$ and $v_i'$s is
$O(\log \frac{1}{\rho})$. Every iteration involves multiplying $\hat{A}$ with a vector $\vec{x}$ which despite the fact
that $\hat{A}$ is an $n^2 \times n^2$ matrix can be done in $O(n^2)$ time. The reason being $\hat{A} = A +  \epsilon I$ and both $A\vec{x}$ and $I\vec{x}$ can be computed in $O(n^2)$ time. Note that the matrix $\hat{A}$ is never stored as it is.  

\section{Conclusion}
In this paper we have introduced a new technique of multiplying two matrices approximately by solving a set of linear equations. By using standard methods for solving some specific linear systems namely positive definite systems we have been able to design deterministic algorithms to ensure arbitrarily small absolute error between our answer and the actual product. Such a result to the best of our knowledge is the first of its kind in the context of approximate matrix multiplication. We suspect that this technique will find further applications in the problem of exact matrix multiplication. 

%\section{Using Underdetermined System}
%(this is more like a note)In this section, we describe a deterministic algorithm to find an %approximate solution to the above system. Recall that our system $Vc = u$
%is an underdetermined linear system having infinitely many solutions. One solution to the %above system is called the \emph{least squares solution} that minimizes the norm of $c$. %This solution is given by
%\begin{eqnarray*}
%C' = v^{T}(vv^{T})^{-1}u
%\end{eqnarray*}
%which in our case turns out to be $C' = \frac{1}{\|v\|^2} C''$ where the $(ij)^{th}$ entry %of $C'' = \frac{v_ju_i}{\|v\|^2} = \frac{v_j}{\|v\|^2} (\sum_{k,l} a_{il}b_{lk}v_k)$. We use %this solution as an approximation to $C$. We prove that this solution itself guarantees %bounds 
%better then Kannans'

% ---- Bibliography ----
%

\end{document}